\documentclass[12pt,a4paper]{article}

\usepackage{empheq}
\usepackage[us,nodayofweek]{datetime}
\usepackage[utf8]{inputenc}
\usepackage[T1]{fontenc}
\usepackage{amsmath,amssymb,amsfonts,amsthm}
\usepackage{mathtools}
\usepackage{bm}
\usepackage[margin=1in]{geometry}
\usepackage{graphicx}
\usepackage{multirow}
\usepackage{booktabs}
\usepackage{tabularx}
\usepackage{natbib}
\usepackage{hyperref}
\usepackage{setspace}
\usepackage{float}
\usepackage{caption}
\usepackage{subcaption}
\usepackage{xcolor}
\usepackage{enumitem}
\usepackage{tikz}
\usetikzlibrary{patterns}
\usepackage{blkarray} 
\newtheorem{theorem}{Theorem}
\newtheorem{lemma}{Lemma}

\title{\Large\textbf{Simultaneous Clustered Orthogonalization}}

\author{
Bastien Buchwalter\thanks{SKEMA Business School \& Universit\'{e} C\^{o}te d'Azur. Email: \texttt{bastien.buchwalter@skema.edu}}
\and
Francis X.\ Diebold\thanks{University of Pennsylvania \& NBER. Email: \texttt{fdiebold@sas.upenn.edu}}
\and
Kamil Y{\i}lmaz\thanks{Ko\c{c} University. Email: \texttt{kyilmaz@ku.edu.tr}}
}

\date{\normalsize \vspace{4mm} First Draft: August 2026\\ This Draft: \today\ }

\begin{document}

\maketitle

\begin{abstract}
Identification in vector autoregressions involves two distinct choices: how extensively to orthogonalize shocks and whether orthogonality is imposed sequentially or simultaneously. Generalized identification imposes no orthogonality; \citet{Sims1980} imposes full orthogonality sequentially; \citet{francis2026principled} impose full orthogonality simultaneously; and \citet{BDY1} provide clustered partial orthogonalization sequentially. We fill the remaining case by developing \emph{simultaneous clustered orthogonalization} (SCO). SCO preserves contemporaneous dependence within economically meaningful clusters while imposing orthogonality across clusters jointly, thereby eliminating dependence on cluster ordering. We formulate the associated correlation-maximizing identification problem and show that, in correlation space, it reduces to the quadratic matrix equation $\boldsymbol Z \boldsymbol R_1 \boldsymbol Z = \boldsymbol R_C$. This yields a closed-form solution, and we prove that the associated identification matrix is the unique global maximizer. SCO is order- and scale-invariant and nests generalized identification and full simultaneous orthogonalization as special cases, yielding a flexible family of structural decompositions indexed by the number and composition of clusters.
\end{abstract}

\bigskip
\noindent\textbf{Acknowledgments:} Kamil Y{\i}lmaz gratefully acknowledges the support of the Scientific and Technological Research Council of Turkey (T\"{U}B\.{I}TAK) under grant number 121C271.

\bigskip
\noindent\textbf{Keywords:}
Identification; Clustered covariance matrices; Blockwise orthogonalization.

\bigskip
\noindent\textbf{JEL Classification:}  C32; C58; G10.

\thispagestyle{plain}
\thispagestyle{empty}

\newpage

\setcounter{page}{1}
\thispagestyle{empty}

\section{Introduction}

Identification is fundamental to structural analysis of vector autoregressions and, in particular, to connectedness measurement based on impulse responses and forecast error variance decompositions. When reduced-form innovations are contemporaneously correlated, their effects cannot in general be uniquely attributed to distinct shocks. Identification therefore requires restrictions on contemporaneous dependence. In connectedness analysis, those restrictions are central to distinguishing shock transmission from contemporaneous co-movement under the maintained identification scheme.

The relevant identification schemes can be organized along two dimensions. The first is the \emph{extent} of orthogonalization. At one extreme, generalized identification preserves the full reduced-form covariance structure and imposes no orthogonality. At the other, full orthogonalization makes all structural shocks mutually orthogonal. Between those extremes, \citet{BDY1} introduce clustered identification: shocks may remain correlated within economically meaningful clusters but are orthogonal across clusters. This provides a natural middle ground and, under sequential identification, sharply reduces the ordering problem. With $N$ variables, full sequential orthogonalization admits $N!$ orderings, whereas clustered sequential orthogonalization requires consideration of only $C!$ cluster orderings.

The second dimension is \emph{how} orthogonality is imposed. The clustered framework of \citet{BDY1} is sequential: clusters are orthogonalized one after another. Clustering therefore reduces but does not eliminate ordering dependence. Different orderings of the same clusters can produce different structural shocks and hence different connectedness measures, because clusters appearing earlier may absorb contemporaneous variation shared with clusters appearing later. \citet{BDY2} document this sensitivity empirically by examining the distribution of cluster-level connectedness across admissible orderings.

\citet{francis2026principled} take a different approach to the ordering dimension. Rather than identifying shocks sequentially, they formulate identification as a simultaneous optimization problem. Under full orthogonality, their procedure selects, among all admissible structural decompositions, the one that maximizes the correspondence between reduced-form and structural shocks. The resulting identification is order- and scale-invariant. It therefore removes the ordering problem, but only by imposing the strongest covariance restriction: mutual orthogonality of all structural shocks.

Viewed in this two-dimensional space, one important case is missing: \emph{partial} orthogonalization implemented \emph{simultaneously}. This paper fills that gap. We retain the economically meaningful clustered structure of \citet{BDY1}, but replace sequential cross-cluster orthogonalization with simultaneous optimization. We call the resulting procedure \emph{simultaneous clustered orthogonalization} (SCO). SCO permits contemporaneous dependence within clusters while imposing orthogonality across clusters jointly, thereby preserving the economic role of clustering and eliminating dependence on cluster ordering.

The resulting identification problem has a particularly simple representation in correlation space. We show that it reduces to the quadratic matrix equation $\boldsymbol Z \boldsymbol R_1 \boldsymbol Z = \boldsymbol R_C$, where $\boldsymbol R_1$ is the reduced-form correlation matrix and $\boldsymbol R_C$ is its clustered target counterpart. For symmetric positive definite $\boldsymbol R_1$ and $\boldsymbol R_C$, the equation has a closed-form symmetric positive definite solution, and the associated identification matrix is the unique global maximizer of the underlying correlation criterion. The framework nests generalized identification when all variables form a single cluster and the full simultaneous orthogonalization of \citet{francis2026principled} when every variable forms its own cluster. Intermediate partitions generate a family of order-invariant partial orthogonalizations indexed by the number and composition of clusters.

The remainder of the paper develops the framework. Section 2 introduces the VAR, impulse-response, and variance-decomposition setup. Section 3 organizes existing identification schemes along the two dimensions of extent and implementation of orthogonality. Section 4 formulates the SCO optimization problem and derives its first-order conditions. Section 5 solves the problem in correlation space and establishes global optimality. Section 6 derives the generalized and fully orthogonalized schemes as corner solutions. Section 7 concludes.

\section{Framework}

We work in a standard vector autoregression (VAR) framework and measure connectedness using impulse response functions (IRFs) and forecast error variance decompositions (VDs). Because reduced-form residuals may be contemporaneously correlated, additional identifying structure is needed to attribute innovations to distinct shocks and trace their transmission through the system.

\subsection{Vector Autoregression}

Consider an $N$-variable, covariance-stationary VAR($p$) of the form $\mathbf{x}_t=\sum_{i=1}^{p}\boldsymbol{\Phi}_i \mathbf{x}_{t-i}+\mathbf{u}_t,$
where $\mathbb{E}[\mathbf{u}_t]=0$ and $\mathbb{V}[\mathbf{u}_t]=\boldsymbol{\Sigma}$ for all $t$. Under covariance stationarity, the system admits a moving-average representation,
$\mathbf{x}_t=\sum_{i=0}^{\infty}\boldsymbol{A}_i\,\mathbf{u}_{t-i}$,
where $\boldsymbol{A}_0 = \boldsymbol I_N$ and, for $i \ge 1$, the coefficient matrices satisfy the recursion $\boldsymbol{A}_i=\boldsymbol{\Phi}_1\boldsymbol{A}_{i-1}+\cdots+\boldsymbol{\Phi}_p\boldsymbol{A}_{i-p}$,
with $\boldsymbol{A}_{i-p}=\mathbf{0}$ for $i-p<0$.

Because the reduced-form residuals $\mathbf{u}_t$ are typically contemporaneously correlated, their effects cannot be uniquely attributed to distinct shocks without further restrictions. We therefore introduce structural shocks through a nonsingular transformation. Let $\boldsymbol{Q}_C$ be any nonsingular matrix, and define $\boldsymbol{\epsilon}_t = \boldsymbol{Q}_C^{-1}\mathbf{u}_t$, where $C$ denotes the number of clusters. It follows that $\mathbb{E}[\boldsymbol{\epsilon}_t]=0$ and $\mathbb{V}[\boldsymbol{\epsilon}_t]=\boldsymbol{\Omega}_C$, where $\boldsymbol{\Omega}_C$ denotes a pre-specified block-diagonal target covariance matrix encoding the desired clustering structure. The moving-average representation can be rewritten as

\begin{align*}
 \mathbf{x}_t=\sum_{i=0}^{\infty}(\boldsymbol{A}_i \boldsymbol{Q}_C)\,\boldsymbol{\epsilon}_{t-i}.
\end{align*}
The matrix $\boldsymbol{Q}_C$ determines how the system's dynamics are decomposed into within-cluster co-movement and cross-cluster shock transmission. In the sequential framework of \citet{BDY1}, $\boldsymbol{Q}_C$ is block lower-triangular by construction, reflecting the order in which clusters are orthogonalized. Here, by contrast, we leave $\boldsymbol{Q}_C$ unrestricted and determine it by optimization. As shown in Section~\ref{sec:corrspace}, the resulting matrix is symmetric up to diagonal scalings rather than triangular, and the identified shocks are invariant to the ordering of variables or clusters. Thus $\boldsymbol{Q}_C$ separates within-cluster co-movement from cross-cluster shock transmission, allowing the latter to be studied without contemporaneous correlation across clusters.

This identification matters because orthogonality allows variance decompositions to attribute forecast-error variation to shocks associated with distinct variables or clusters. Without such restrictions, contemporaneously correlated reduced-form innovations cannot generally be assigned uniquely to individual shocks. This is the identification issue underlying the distinction between generalized and orthogonalized impulse responses and variance decompositions \citep{koop1996impulse,pesaran1998generalized}. In connectedness analysis, the same distinction underlies the separation of co-movement from contagion \citep{forbes2002no}.

\subsection{IRFs and VDs}
IRFs trace the system's dynamic response to a given structural shock. Following \citet{BDY1}, we write the scaled impulse response to a one-standard-deviation shock in $\epsilon_{jt}$ as
\begin{equation}
\boldsymbol{\psi}_j^C(h) = \frac{(\boldsymbol{A}_h\boldsymbol{Q}_C)\boldsymbol{\Omega}_C\boldsymbol{e}_j}{\sqrt{\omega_{C,jj}}},\nonumber
\label{scaled2}
\end{equation}
We denote the $H$-step-ahead VD by $\widetilde{\theta}_{ij}^C(H)$:
\begin{align}
\widetilde{\theta}_{ij}^C(H) = \frac{\displaystyle\sum_{h=0}^{H-1}\left(\boldsymbol{e}_i'\boldsymbol{\psi}_j^C(h)\right)^2}{\displaystyle\sum_{h=0}^{H-1}\boldsymbol{e}_i'\boldsymbol{A}_h\boldsymbol{\Sigma}\boldsymbol{A}_h'\boldsymbol{e}_i} = \frac{\displaystyle\omega_{C,jj}^{-1}\sum_{h=0}^{H-1}\left(\boldsymbol{e}_i'\boldsymbol{A}_h\boldsymbol{Q}_C\boldsymbol{\Omega}_C\boldsymbol{e}_j\right)^2}{\displaystyle\sum_{h=0}^{H-1}\boldsymbol{e}_i'\boldsymbol{A}_h\boldsymbol{\Sigma}\boldsymbol{A}_h'\boldsymbol{e}_i}, \nonumber
\label{thetasec}
\end{align}
where $\widetilde{\theta}_{ij}^C(H)$ is the share of the $H$-step-ahead forecast error variance of asset $i$ attributable to shocks originating from asset $j$. Because structural shocks may remain correlated under clustered and generalized identification, these raw shares need not sum to one: $\sum_{j=1}^N\widetilde{\theta}_{ij}^C(H)\neq 1$. Following \citet{diebold2012better}, we therefore normalize the variance decompositions as
\begin{equation}
\theta_{ij}^C(H) = \frac{\widetilde{\theta}_{ij}^C(H)}{\displaystyle\sum_{j=1}^N\widetilde{\theta}_{ij}^C(H)}.
\label{normalizedvd}
\end{equation}

For connectedness analysis, VDs provide a convenient summary of the dynamic responses by quantifying the contribution of individual shocks to forecast uncertainty. In particular, $\theta_{ij}^C(H)$ measures the share of the $H$-step-ahead forecast error variance of asset $i$ attributable to shocks originating from asset $j$. The resulting variance-decomposition matrix provides a natural basis for aggregating connectedness across assets and horizons and can also be interpreted as the weighted adjacency matrix of a directed network \citep{diebold2014network}.

\section{Orthogonalization of Residuals}

We now organize the identification schemes along the two dimensions introduced above: the extent of orthogonalization and whether orthogonality is imposed sequentially or simultaneously.

\subsection{Extent of Orthogonalization}

The first dimension concerns the extent of orthogonalization. At one extreme, with a single cluster, no orthogonalization is performed, corresponding to the generalized approach of \citet{koop1996impulse,pesaran1998generalized}. At the other extreme, with one variable per cluster, all structural shocks are mutually orthogonal, corresponding to the conventional Cholesky identification of \citet{Sims1980}. Between these extremes, \citet{BDY1} group variables into economically meaningful clusters, allowing shocks to remain correlated within clusters while imposing orthogonality across clusters.

Although clustering reduces the number of admissible orderings from $N!$ to $C!$, it does not eliminate ordering dependence: different cluster orderings can still produce different identified shocks and connectedness measures.

\subsection{Order of Orthogonalization}

Under sequential clustered identification, the clusters must still be ordered. As \citet{BDY2} show, the estimated role of a cluster can vary with its position in that ordering: a cluster appearing earlier may absorb a larger share of contemporaneous variation common across clusters. Ordering dependence is therefore economically relevant, not merely computational.

Simultaneous identification instead treats all clusters symmetrically and determines the structural shocks jointly. \citet{francis2026principled} provide such a scheme under full orthogonality. SCO combines their simultaneous approach with the clustered structure of \citet{BDY1}.

\subsection{Overview}

Table 1 summarizes the resulting framework. The rows distinguish the extent of orthogonalization, while the columns distinguish sequential from simultaneous identification, making the position of the present paper explicit.
   
\begin{table}[H]
\caption{Identification schemes along the two dimensions of orthogonalization}
\begin{center}
\renewcommand{\arraystretch}{1.5}
\setlength{\tabcolsep}{14pt}

\begin{tabular}{lcc}
\toprule
\textbf{Orthogonalization} & Sequential & Simultaneous \\ \midrule
None &
\multicolumn{2}{c}{\citet{koop1996impulse}; \citet{pesaran1998generalized}} \\
Partial & \citet{BDY1} & \textit{Present paper} \\
Full & \citet{Sims1980} & \citet{francis2026principled} \\
\bottomrule
\end{tabular}
\end{center}

\noindent \footnotesize
\textit{Notes:} ``None'' denotes generalized identification, under which the reduced-form covariance structure is preserved; in this case the distinction between sequential and simultaneous identification is not applicable. ``Partial'' denotes clustered identification, where structural shocks are orthogonal across clusters but may remain correlated within clusters. ``Full'' denotes mutual orthogonality of all structural shocks. Sequential identification imposes orthogonality according to an ordering, whereas simultaneous identification determines the structural shocks jointly and is therefore order-invariant.
\end{table}

\section{Simultaneous Clustered Orthogonalization}

We now develop SCO formally. We first formulate the optimization problem and then characterize the structural shocks under block-diagonal target covariance structures.

\subsection{Optimization Problem}

\citet{francis2026principled} cast structural identification as an optimization problem. They seek an identification matrix $\boldsymbol{Q}_N^{-1}$ such that the structural shocks are mutually orthogonal,
\begin{equation}
\mathbb{V}\!\left[\boldsymbol{Q}_N^{-1}\mathbf{u}_t\right]=\boldsymbol{I}_N,
\end{equation}
and, among all matrices satisfying this restriction, select the one that maximizes the correspondence between reduced-form and structural shocks,
\begin{align}
\max_{\boldsymbol{Q}_N^{-1}} \sum_{i=1}^{N}\operatorname{corr}(u_i,\epsilon_i), \qquad \text{s.t.} \qquad \mathbb{V}\!\left[\boldsymbol{Q}_N^{-1}\mathbf{u}_t\right]=\boldsymbol{I}_N.
\end{align}

SCO retains this identification criterion while generalizing the admissible covariance structure of the structural shocks. In particular, shocks may remain correlated within economically meaningful clusters while being orthogonal across clusters. Let $\boldsymbol{\Omega}_C$ denote the corresponding block-diagonal target covariance matrix. The simultaneous clustered orthogonalization problem is then
\begin{align*}
\max_{\boldsymbol{Q}_C^{-1}} \sum_{i=1}^{N}\operatorname{corr}(u_i,\epsilon_i), \qquad \text{s.t.} \qquad \mathbb{V}\!\left[\boldsymbol{Q}_C^{-1}\mathbf{u}_t\right]=\boldsymbol{\Omega}_C.
\end{align*}

For notational convenience, let $\boldsymbol{X}_C=\boldsymbol{Q}_C^{-1}$. The optimization problem can then be written as
\begin{align*}
\max_{\boldsymbol{X}_C} \sum_{i=1}^{N}\operatorname{corr}(u_i,\epsilon_i), \qquad \text{s.t.} \qquad \boldsymbol{X}_C\boldsymbol{\Sigma}\boldsymbol{X}_C^\top=\boldsymbol{\Omega}_C.
\end{align*}

Thus SCO preserves the correlation-based criterion of \citet{francis2026principled} while admitting intermediate covariance structures between full orthogonality and the unrestricted reduced-form covariance matrix. The solution is indexed by the number and composition of clusters and remains order- and scale-invariant.

The two benchmark identification schemes arise as corner solutions. When $C=N$, each variable forms its own cluster and $\boldsymbol{\Omega}_N=\boldsymbol{I}_N$, recovering the orthogonalized framework of \citet{francis2026principled}. When $C=1$, all variables belong to a single cluster and $\boldsymbol{\Omega}_1=\boldsymbol{\Sigma}$; the identity transformation $\boldsymbol{Q}_1^{-1}=\boldsymbol{I}_N$ then satisfies the constraint, recovering the generalized framework of \citet{koop1996impulse,pesaran1998generalized}.

\subsection{Lagrangian}

To characterize the optimal identification matrix, define the diagonal scaling matrix $\boldsymbol S_C=\operatorname{diag}\left(\frac{1}{\sqrt{\omega_{C,11}\sigma_{11}}},\ldots,\frac{1}{\sqrt{\omega_{C,NN}\sigma_{NN}}}\right).$ The objective function can then be written compactly as
\begin{align*}
\sum_{i=1}^N\frac{\mathbf e_i^\top\boldsymbol X_C\boldsymbol\Sigma\mathbf e_i}{\sqrt{\omega_{C,ii}}\sqrt{\sigma_{ii}}}=\operatorname{tr}\left(\boldsymbol S_C\boldsymbol X_C\boldsymbol\Sigma\right).
\end{align*}
The optimization problem therefore becomes
\begin{align*}
\max_{\boldsymbol X_C}\operatorname{tr}\left(\boldsymbol S_C\boldsymbol X_C\boldsymbol\Sigma\right)\qquad\text{s.t.}\qquad\boldsymbol X_C\boldsymbol\Sigma\boldsymbol X_C^\top=\boldsymbol\Omega_C.
\end{align*}
Let $\boldsymbol\Lambda$ denote a matrix of Lagrange multipliers. The corresponding Lagrangian is
\begin{align*}
\mathcal L=f(\boldsymbol X_C)+g(\boldsymbol X_C),
\end{align*}
where $f(\boldsymbol X_C)=\operatorname{tr}\left(\boldsymbol S_C\boldsymbol X_C\boldsymbol\Sigma\right)$ and $g(\boldsymbol X_C)=-\operatorname{tr}\left(\boldsymbol\Lambda^\top\left(\boldsymbol X_C\boldsymbol\Sigma\boldsymbol X_C^\top-\boldsymbol\Omega_C\right)\right).$

The Karush-Kuhn-Tucker (KKT) conditions \citep{karush1939minima,kuhn1951nonlinear} are:

\begin{align}
\frac{\partial\mathcal L}{\partial\boldsymbol X_C}
&=\frac{\partial f}{\partial\boldsymbol X_C}
+\frac{\partial g}{\partial\boldsymbol X_C}
=\mathbf0,\label{eq:lagrange1}\\
\frac{\partial\mathcal L}{\partial\boldsymbol \Lambda}
&=\boldsymbol X_C\boldsymbol\Sigma\boldsymbol X_C^\top
-\boldsymbol\Omega_C
=\mathbf0.
\label{eq:lagrange2}
\end{align}

To derive equation \eqref{eq:lagrange1}, we treat the objective term $f$ and the constraint term $g$ separately.

\subsubsection{Derivative of the objective function}

Recall that $f(\boldsymbol X_C)=\operatorname{tr}\left(\boldsymbol S_C\boldsymbol X_C\boldsymbol\Sigma\right).$ Taking the differential gives
\begin{align}
df=&\operatorname{tr}\left(\boldsymbol S_C\,d\boldsymbol X_C\,\boldsymbol\Sigma\right)\nonumber\\
=&\operatorname{tr}\left(\boldsymbol\Sigma\boldsymbol S_C\,d\boldsymbol X_C\right), \label{eq:of_1}
\end{align}
where the second equality uses the cyclic property of the trace to place the differential $d\boldsymbol X_C$ last.

By definition of the matrix derivative,
\begin{align}
df=\operatorname{tr}\left(\frac{\partial f}{\partial\boldsymbol X_C^\top} d\boldsymbol X_C\right). \label{eq:of_2}
\end{align}
Comparing equations \eqref{eq:of_1} and \eqref{eq:of_2} yields
\begin{align*}
\frac{\partial f}{\partial\boldsymbol X_C^\top}=&\boldsymbol\Sigma\boldsymbol S_C\\
\frac{\partial f}{\partial\boldsymbol X_C}=&\left(\boldsymbol\Sigma\boldsymbol S_C\right)^\top.
\end{align*}
Because both $\boldsymbol\Sigma$ and $\boldsymbol S_C$ are symmetric,
\begin{align}
\frac{\partial f}{\partial\boldsymbol X_C}=\boldsymbol S_C\boldsymbol\Sigma.\label{eq:of_final}
\end{align}

\subsubsection{Derivative of the constraint}
Next consider the constraint component of the Lagrangian,
\begin{align*}
g(\boldsymbol X_C)=-\operatorname{tr}\left(\boldsymbol\Lambda^\top\left(\boldsymbol X_C\boldsymbol\Sigma\boldsymbol X_C^\top-\boldsymbol\Omega_C\right)\right).
\end{align*}
Because $\boldsymbol\Omega_C$ is a prescribed target covariance matrix, it is fixed with respect to the optimization variable $\boldsymbol X_C$, and its differential is zero. Hence
\begin{align}
dg=&-\operatorname{tr}\left(\boldsymbol\Lambda^\top d\left(\boldsymbol X_C\boldsymbol\Sigma\boldsymbol X_C^\top\right)\right)\nonumber \\
=&-\operatorname{tr}\left(\boldsymbol\Lambda^\top d\boldsymbol X_C\,\boldsymbol\Sigma\,\boldsymbol X_C^\top+\boldsymbol\Lambda^\top \boldsymbol X_C\boldsymbol\Sigma\,d\boldsymbol X_C^\top\right)\nonumber\\ =&-\operatorname{tr}\left(\boldsymbol\Sigma\boldsymbol X_C^\top\boldsymbol\Lambda^\top d\boldsymbol X_C\right) -\operatorname{tr}\left(\boldsymbol\Sigma\boldsymbol X_C^\top\boldsymbol\Lambda d\boldsymbol X_C\right) \nonumber \\ 
=&-\operatorname{tr}\left(\boldsymbol\Sigma\boldsymbol X_C^\top(\boldsymbol\Lambda+\boldsymbol\Lambda^\top )d\boldsymbol X_C\right). \label{eq:of_3}
\end{align}

By definition of the matrix derivative,
\begin{align}
dg=&\operatorname{tr}\left(\frac{\partial g}{\partial\boldsymbol X_C^\top} d\boldsymbol X_C\right) \label{eq:of_4}
\end{align}
Comparing equations \eqref{eq:of_3} and \eqref{eq:of_4} yields
\begin{align}
\frac{\partial g}{\partial\boldsymbol X_C^\top}
=&
-\boldsymbol\Sigma\boldsymbol X_C^\top(\boldsymbol\Lambda+\boldsymbol\Lambda^\top )\nonumber \\
\frac{\partial g}{\partial\boldsymbol X_C}
=&
-(\boldsymbol\Lambda+\boldsymbol\Lambda^\top )\boldsymbol X_C\boldsymbol\Sigma. \label{eq:con_final}
\end{align}

Substituting equations \eqref{eq:of_final} and \eqref{eq:con_final} into equation \eqref{eq:lagrange1} gives
\begin{align}
\frac{\partial f}{\partial\boldsymbol X_C}+\frac{\partial g}{\partial\boldsymbol X_C}=&\mathbf0,\nonumber\\
\boldsymbol S_C\boldsymbol\Sigma-(\boldsymbol\Lambda+\boldsymbol\Lambda^\top )\boldsymbol X_C\boldsymbol\Sigma=&\mathbf0.\nonumber\\
\boldsymbol S_C\boldsymbol\Sigma=&(\boldsymbol\Lambda+\boldsymbol\Lambda^\top )\boldsymbol X_C\boldsymbol\Sigma\nonumber\\
\boldsymbol S_C=&(\boldsymbol\Lambda+\boldsymbol\Lambda^\top )\boldsymbol X_C. \nonumber
\end{align}

Thus the first-order condition links the unknown transformation matrix $\boldsymbol X_C$ linearly to the Lagrange multipliers. Together with the feasibility constraint $\boldsymbol X_C\boldsymbol\Sigma\boldsymbol X_C^\top=\boldsymbol\Omega_C$ from equation \eqref{eq:lagrange2}, it yields the KKT system
\begin{align}
\boldsymbol S_C
&=
(\boldsymbol\Lambda+\boldsymbol\Lambda^\top )
\boldsymbol X_C,
\label{eq:kkt1}\\
\boldsymbol\Omega_C
&=
\boldsymbol X_C
\boldsymbol\Sigma
\boldsymbol X_C^\top.
\label{eq:kkt2}
\end{align}

\subsubsection{Eliminating the Lagrange multipliers}

The first-order condition implies
\begin{align*}
\boldsymbol S_C=&(\boldsymbol\Lambda+\boldsymbol\Lambda^\top )
\boldsymbol X_C.\\
\boldsymbol S_C\boldsymbol X_C^{-1}=&\boldsymbol\Lambda+\boldsymbol\Lambda^\top .
\end{align*}
The right-hand side is symmetric by construction:
\begin{align*}
(\boldsymbol\Lambda+\boldsymbol\Lambda^\top )^\top
=
\boldsymbol\Lambda+\boldsymbol\Lambda^\top .
\end{align*}
Therefore,
\begin{align*}
\boldsymbol S_C\boldsymbol X_C^{-1}
=&
\left(\boldsymbol S_C\boldsymbol X_C^{-1}\right)^\top\\
=&
\boldsymbol X_C^{-\top}\boldsymbol S_C.
\end{align*}

We can therefore eliminate the Lagrange multipliers. The optimal transformation matrix $\boldsymbol X_C$ is characterized by the two matrix equations
\begin{align}
\boldsymbol S_C\boldsymbol X_C^{-1}
&=
\boldsymbol X_C^{-\top}\boldsymbol S_C,
\label{eq:KKT_final_11}\\
\boldsymbol\Omega_C
&=
\boldsymbol X_C\boldsymbol\Sigma\boldsymbol X_C^\top.
\label{eq:KKT_final_12}
\end{align}

\section{Identification in Correlation Space}
\label{sec:corrspace}

The KKT conditions derived above characterize the stationary points of the identification problem. To obtain an explicit solution, we must solve the matrix system in equations \eqref{eq:KKT_final_11} and \eqref{eq:KKT_final_12}. A direct solution is possible, but it involves cumbersome combinations of covariance matrices and their inverses. Reformulating the problem in correlation space separates marginal scales from dependence and yields a much cleaner quadratic matrix equation.

\subsection{Target Correlation Matrix}
We first separate marginal scales from dependence by rewriting the covariance matrices in terms of their standard deviations and correlations. Recall the matrix $\boldsymbol S_C$ introduced above to simplify the objective function. It can be written as
\begin{align}
 \boldsymbol S_C=\operatorname{diag}\left(\frac{1}{\sqrt{\omega_{C,11}\sigma_{11}}},\ldots,\frac{1}{\sqrt{\omega_{C,NN}\sigma_{NN}}}\right)=\boldsymbol D_C^{-1}\boldsymbol D_1^{-1}, \label{eq:S_C}
\end{align}
where
\begin{align*}
\boldsymbol D_1&=\operatorname{diag}\left(\sqrt{\sigma_{11}},\ldots,\sqrt{\sigma_{NN}}\right),\\\boldsymbol D_C&=
\operatorname{diag}\left(\sqrt{\omega_{C,11}},\ldots,\sqrt{\omega_{C,NN}}\right).
\end{align*}

Equation \eqref{eq:S_C} decomposes each covariance matrix into marginal standard deviations and a correlation matrix. This scaling is not imposed a priori; it emerges from the KKT conditions through $\boldsymbol S_C$, suggesting that the optimization is fundamentally over correlation structures rather than covariance structures. Define the corresponding decompositions,\footnote{The diagonal scaling of $\boldsymbol\Omega_C$ entails an additional normalization choice beyond the clustered correlation structure $\boldsymbol R_C$. We focus here on the latter and defer a systematic treatment of alternative shock normalizations and their implications to a subsequent version.}
\begin{align}
\boldsymbol\Sigma
&=
\boldsymbol D_1
\boldsymbol R_1
\boldsymbol D_1,\label{eq:D_1}\\
\boldsymbol\Omega_C
&=
\boldsymbol D_C
\boldsymbol R_C
\boldsymbol D_C.\label{eq:D_C}
\end{align}
Throughout the theoretical derivation, we assume that $\boldsymbol R_C$ is a symmetric positive definite block-diagonal correlation matrix. For empirical implementation, we construct $\boldsymbol R_C$ from the reduced-form correlation matrix $\boldsymbol R_1$ by retaining the empirical within-cluster correlations and setting all cross-cluster correlations to zero. Formally,
\begin{align*}
(\boldsymbol R_C)_{ij}
=
\begin{cases}
(\boldsymbol R_1)_{ij}, & \text{if } i \text{ and } j \text{ belong to the same cluster},\\
0, & \text{otherwise}.
\end{cases}
\end{align*}

Substituting equations \eqref{eq:D_1} and \eqref{eq:D_C} into the feasibility constraint \eqref{eq:KKT_final_12} gives
\begin{align*}
\boldsymbol\Omega_C
=&
\boldsymbol X_C\boldsymbol\Sigma\boldsymbol X_C^\top\\
\boldsymbol D_C\boldsymbol R_C\boldsymbol D_C
=&
\boldsymbol X_C\boldsymbol D_1\boldsymbol R_1\boldsymbol D_1\boldsymbol X_C^\top
\\
\boldsymbol R_C
=&
\boldsymbol D_C^{-1}\boldsymbol X_C\boldsymbol D_1\boldsymbol R_1\boldsymbol D_1\boldsymbol X_C^\top\boldsymbol D_C^{-1}\\
\boldsymbol R_C
=&
\boldsymbol Z
\boldsymbol R_1
\boldsymbol Z^\top  
\end{align*}
where 
\begin{align}
 \boldsymbol Z=\boldsymbol D_C^{-1}\boldsymbol X_C\boldsymbol D_1. \label{eq:Z}
\end{align}
The KKT conditions become
\begin{align}
\boldsymbol S_C\boldsymbol X_C^{-1}
&=
\boldsymbol X_C^{-\top}\boldsymbol S_C,
\label{eq:KKT_final_31}\\
\boldsymbol R_C
&=
\boldsymbol Z
\boldsymbol R_1
\boldsymbol Z^\top  \label{eq:KKT_final_32}
\end{align}

\subsection{Correlation Transformation}

Rewriting equation (\ref{eq:Z}) yields
\begin{align}
\boldsymbol X_C
=
\boldsymbol D_C\boldsymbol Z\boldsymbol D_1^{-1}, \label{eq:Z_bis}
\end{align}
so that
$\boldsymbol X_C^{-1}=\boldsymbol D_1\boldsymbol Z^{-1}\boldsymbol D_C^{-1}$ and 
$\boldsymbol X_C^{-\top}=\boldsymbol D_C^{-1}\boldsymbol Z^{-\top}\boldsymbol D_1$. Substituting these expressions and $\boldsymbol S_C=\boldsymbol D_C^{-1}\boldsymbol D_1^{-1}$ from equation \eqref{eq:S_C} into the first-order condition \eqref{eq:KKT_final_31} gives
\begin{align*}
\boldsymbol S_C\boldsymbol X_C^{-1}
&=
\boldsymbol X_C^{-\top}\boldsymbol S_C\\
\boldsymbol S_C\boldsymbol D_1\boldsymbol Z^{-1}\boldsymbol D_C^{-1}
&=
\boldsymbol D_C^{-1}\boldsymbol Z^{-\top}\boldsymbol D_1\boldsymbol S_C\\
\boldsymbol D_C^{-1}\boldsymbol D_1^{-1}\boldsymbol D_1\boldsymbol Z^{-1}\boldsymbol D_C^{-1}
&=
\boldsymbol D_C^{-1}\boldsymbol Z^{-\top}\boldsymbol D_1\boldsymbol D_C^{-1}\boldsymbol D_1^{-1}\\
\boldsymbol D_C^{-1}\boldsymbol Z^{-1}\boldsymbol D_C^{-1}
&=\boldsymbol D_C^{-1}\boldsymbol Z^{-\top}\boldsymbol D_C^{-1}\\
\boldsymbol Z^{-1}
&=
\boldsymbol Z^{-\top}\\
\boldsymbol Z
&=
\boldsymbol Z^\top  .
\end{align*}

The symmetry of $\boldsymbol Z$ provides the key simplification. The feasibility constraint \eqref{eq:KKT_final_32} reduces to the quadratic matrix equation
\begin{equation}
\boldsymbol Z\boldsymbol R_1\boldsymbol Z=\boldsymbol R_C.
\label{eq:Z_squared}
\end{equation}
Thus, in correlation space, SCO reduces to finding the symmetric positive definite transformation that maps $\boldsymbol R_1$ into $\boldsymbol R_C$. Because both matrices are symmetric positive definite, equation \eqref{eq:Z_squared} has a unique symmetric positive definite solution:
\begin{align*}
\boldsymbol Z
=\boldsymbol R_1^{-1/2}
\left(\boldsymbol R_1^{1/2}\boldsymbol R_C\boldsymbol R_1^{1/2}\right)^{1/2}
\boldsymbol R_1^{-1/2}.
\end{align*}
Substituting this solution into equation \eqref{eq:Z_bis} gives
\begin{align*}
\boldsymbol X_C
&=\boldsymbol D_C\boldsymbol Z\boldsymbol D_1^{-1}\\
\boldsymbol X_C
&=
\boldsymbol D_C
\boldsymbol R_1^{-1/2}
\left(\boldsymbol R_1^{1/2}\boldsymbol R_C\boldsymbol R_1^{1/2}\right)^{1/2}\boldsymbol R_1^{-1/2}\boldsymbol D_1^{-1}.
\end{align*}
Hence the optimal identification matrix is
\begin{empheq}[box=\fbox]{equation}
\label{eq:closed_form}
\boldsymbol Q_C^{-1}
=
\boldsymbol D_C\boldsymbol R_1^{-1/2}
\left(\boldsymbol R_1^{1/2}\boldsymbol R_C\boldsymbol R_1^{1/2}\right)^{1/2}
\boldsymbol R_1^{-1/2}\boldsymbol D_1^{-1}.
\end{empheq}

\subsection{Global Maximum}
Because the KKT conditions are necessary but not sufficient, equation \eqref{eq:closed_form} initially identifies only a stationary point. The next result establishes global optimality and uniqueness. The proof in Appendix~\ref{app:optimality} extends the argument of \citet{francis2026principled}: after whitening by the target covariance, every admissible identification matrix is parameterized by an orthogonal transformation of the solution, and no such transformation can increase the objective.

\begin{theorem}[Global optimality]
\label{thm:optimality}
Let $\boldsymbol R_1$ and $\boldsymbol R_C$ be symmetric positive definite. Then $\boldsymbol Q_C^{-1}$ in equation \eqref{eq:closed_form} is the unique global maximizer of the simultaneous clustered orthogonalization problem, and the maximized objective is
\begin{equation}
\sum_{i=1}^{N} \mathrm{corr}(u_i, \epsilon_i)
\;=\; \mathrm{tr}\!\left[ \left( \boldsymbol R_1^{1/2} \boldsymbol R_C\, \boldsymbol R_1^{1/2} \right)^{1/2} \right]
\;=\; \sum_{i=1}^{N} \mu_i^{1/2},
\label{eq:maxvalue}
\end{equation}
where $\mu_1, \ldots, \mu_N > 0$ are the eigenvalues of
$\boldsymbol R_1^{1/2} \boldsymbol R_C\, \boldsymbol R_1^{1/2}$.
\end{theorem}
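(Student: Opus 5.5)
The plan is to pass entirely to correlation space, parameterize the whole feasible set explicitly, and reduce the problem to a trace maximization over the orthogonal group, where von Neumann's trace inequality gives both the bound and the equality case. No use of the KKT conditions is needed for this argument; they only serve as a consistency check at the end.

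\textbf{Step 1 (objective and constraint in correlation space).} Use the change of variables \eqref{eq:Z}, $\boldsymbol Z=\boldsymbol D_C^{-1}\boldsymbol X_C\boldsymbol D_1$. It is a bijection on nonsingular matrices, so it suffices to solve the problem in $\boldsymbol Z$.
\begin{itemize}
\item Substituting $\boldsymbol S_C=\boldsymbol D_C^{-1}\boldsymbol D_1^{-1}$ and $\boldsymbol\Sigma=\boldsymbol D_1\boldsymbol R_1\boldsymbol D_1$, and using that diagonal matrices commute, gives
\begin{align*}
\boldsymbol S_C\boldsymbol X_C\boldsymbol\Sigma=\boldsymbol D_1^{-1}\boldsymbol Z\boldsymbol R_1\boldsymbol D_1 .
\end{align*}
Hence the objective equals $\operatorname{tr}(\boldsymbol Z\boldsymbol R_1)$.
\item As already derived, the constraint becomes $\boldsymbol Z\boldsymbol R_1\boldsymbol Z^\top=\boldsymbol R_C$.
\end{itemize}
I will not impose the symmetry of $\boldsymbol Z$ found from the KKT system, since the goal is to show optimality over all feasible matrices.

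\textbf{Step 2 (parameterize the feasible set).} I claim every feasible $\boldsymbol Z$ has the form $\boldsymbol Z=\boldsymbol R_C^{1/2}\boldsymbol U\boldsymbol R_1^{-1/2}$ with $\boldsymbol U$ orthogonal, and conversely every such matrix is feasible.
\begin{itemize}
\item Given feasible $\boldsymbol Z$, set $\boldsymbol U=\boldsymbol R_C^{-1/2}\boldsymbol Z\boldsymbol R_1^{1/2}$. The constraint then says exactly $\boldsymbol U\boldsymbol U^\top=\boldsymbol I_N$.
\item The converse is immediate by substitution.
\end{itemize}
With this parameterization the objective becomes
\begin{align*}
\operatorname{tr}\big(\boldsymbol R_C^{1/2}\boldsymbol U\boldsymbol R_1^{1/2}\big)=\operatorname{tr}(\boldsymbol U\boldsymbol M),\qquad \boldsymbol M=\boldsymbol R_1^{1/2}\boldsymbol R_C^{1/2}.
\end{align*}
The matrix $\boldsymbol M$ is nonsingular, and $\boldsymbol M\boldsymbol M^\top=\boldsymbol R_1^{1/2}\boldsymbol R_C\boldsymbol R_1^{1/2}$. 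Consequently the singular values of $\boldsymbol M$ are exactly $\mu_i^{1/2}$, where the $\mu_i$ are the eigenvalues in the statement.

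\textbf{Step 3 (the bound and the equality case).} Write the SVD $\boldsymbol M=\boldsymbol P\boldsymbol\Delta\boldsymbol V^\top$ with $\boldsymbol\Delta=\operatorname{diag}(\mu_i^{1/2})$. Then
\begin{align*}
\operatorname{tr}(\boldsymbol U\boldsymbol M)=\operatorname{tr}(\boldsymbol W\boldsymbol\Delta)=\sum_i w_{ii}\mu_i^{1/2}\le\sum_i\mu_i^{1/2},\qquad \boldsymbol W=\boldsymbol V^\top\boldsymbol U\boldsymbol P,
\end{align*}
since $\boldsymbol W$ is orthogonal and so $|w_{ii}|\le 1$.
\begin{itemize}
\item \emph{Equality.} Because every $\mu_i>0$, equality forces $w_{ii}=1$ for all $i$. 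For an orthogonal matrix this means $\boldsymbol W=\boldsymbol I_N$.
\item \emph{Unique maximizer.} $\boldsymbol W=\boldsymbol I_N$ gives $\boldsymbol U^*=\boldsymbol V\boldsymbol P^\top$, equivalently $\boldsymbol U^*=(\boldsymbol M^\top\boldsymbol M)^{1/2}\boldsymbol M^{-1}$. This is independent of which SVD is chosen, because $\boldsymbol M$ is invertible.
\end{itemize}
The equality analysis is the step I expect to need the most care. The inequality is standard; uniqueness relies on strict positivity of the singular values, which is where positive definiteness of both $\boldsymbol R_1$ and $\boldsymbol R_C$ enters.

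\textbf{Step 4 (identify the maximizer with \eqref{eq:closed_form}).} The matrix $\boldsymbol Z^*$ from Section~\ref{sec:corrspace} is symmetric positive definite and satisfies \eqref{eq:Z_squared}, so it is feasible. Its value is
\begin{align*}
\operatorname{tr}(\boldsymbol Z^*\boldsymbol R_1)=\operatorname{tr}\big(\boldsymbol R_1^{1/2}\boldsymbol Z^*\boldsymbol R_1^{1/2}\big)=\operatorname{tr}\big[(\boldsymbol R_1^{1/2}\boldsymbol R_C\boldsymbol R_1^{1/2})^{1/2}\big]=\sum_i\mu_i^{1/2}.
\end{align*}
So $\boldsymbol Z^*$ attains the upper bound from Step 3. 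By the uniqueness shown there, it is the only maximizer. Mapping back through \eqref{eq:Z_bis} then gives uniqueness of $\boldsymbol Q_C^{-1}$ in \eqref{eq:closed_form} and the maximized value \eqref{eq:maxvalue}.
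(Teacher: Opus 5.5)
Your proof is correct and follows essentially the same route as the paper. Both arguments parameterize the feasible set by an orthogonal matrix, reduce the objective to a linear trace functional of that matrix, and use $|w_{ii}|\le 1$ together with strict positivity of the spectrum to obtain the bound and uniqueness. The only difference is the base point of the parameterization. The paper writes $\boldsymbol X=\boldsymbol G\boldsymbol U\boldsymbol G^{-1}\boldsymbol X_C$ around the candidate. Its matrix $\boldsymbol K=\boldsymbol R_C^{1/2}\boldsymbol Z^{-1}\boldsymbol R_C^{1/2}$, which equals $\boldsymbol U^*\boldsymbol M=(\boldsymbol M^\top\boldsymbol M)^{1/2}$ in your notation, is therefore symmetric positive definite, and the maximizer is read off as $\boldsymbol U=\boldsymbol I_N$. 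Your whitening-based parameterization instead needs an SVD of the nonsymmetric $\boldsymbol M$ and a separate check that $\boldsymbol Z^*$ attains the bound. In exchange, it derives the maximizer rather than only verifying the KKT candidate.
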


Under the construction above, positive definiteness of $\boldsymbol R_C$ follows immediately: its diagonal blocks are principal submatrices of the positive definite matrix $\boldsymbol R_1$, and a block-diagonal matrix with positive definite blocks is itself positive definite.

\section{Corner Solutions}
\label{sec:corners}

The closed-form solution nests the two benchmark identification schemes as corner solutions. More generally, varying the number and composition of clusters generates a family of schemes in which structural shocks are orthogonal across clusters but may remain correlated within clusters. Generalized and fully orthogonalized identification therefore emerge as the two extremes of the broader clustered family.

\subsection{Full Orthogonalization ($C=N$)}

When every variable forms its own cluster, the structural covariance matrix reduces to the identity, $\boldsymbol\Omega_N=\boldsymbol I_N$, implying
\begin{align*}
\boldsymbol D_N
&=\boldsymbol I_N,
&&\text{Diagonal matrix of standard deviations},\\
\boldsymbol R_N
&=\boldsymbol I_N,
&&\text{Correlation matrix of structural shocks}.
\end{align*}
Substituting these expressions into the closed-form solution from equation (\ref{eq:closed_form}) gives
\begin{align*}
\boldsymbol Q_N^{-1}
&=\boldsymbol R_1^{-1/2}\left(\boldsymbol R_1^{1/2}\boldsymbol I_N\boldsymbol R_1^{1/2}\right)^{1/2}\boldsymbol R_1^{-1/2}\boldsymbol D_1^{-1}\\
&=\boldsymbol R_1^{-1/2}\boldsymbol R_1^{1/2}\boldsymbol R_1^{-1/2}\boldsymbol D_1^{-1}\\
&=\boldsymbol R_1^{-1/2}\boldsymbol D_1^{-1}.
\end{align*}
This is precisely the orthogonalization obtained by \citet{francis2026principled}:
\begin{align*}
\boldsymbol Q_N^{-1}
=
\boldsymbol R_1^{-1/2}
\boldsymbol D_1^{-1}.
\end{align*}
The maximized objective \eqref{eq:maxvalue} also specializes correctly when $C=N$: $\boldsymbol R_C=\boldsymbol I_N$, the $\mu_i$ are the eigenvalues of $\boldsymbol R_1$, and \eqref{eq:maxvalue} matches the maximal average correlation of \citet{francis2026principled}.

\subsection{No Orthogonalization ($C=1$)}

When all variables belong to a single cluster, $\boldsymbol\Omega_1=\boldsymbol\Sigma$, with
\begin{align*}
\boldsymbol D_1
&=\operatorname{diag}\!\left(\sqrt{\sigma_{11}},\ldots,\sqrt{\sigma_{NN}}\right),
&&\text{Diagonal matrix of standard deviations},\\
\boldsymbol R_1
&=\boldsymbol D_1^{-1}\boldsymbol\Sigma\boldsymbol D_1^{-1},
&&\text{Correlation matrix of reduced-form residuals}.
\end{align*}

Substituting these expressions into the closed-form solution from equation (\ref{eq:closed_form}) gives
\begin{align*}
\boldsymbol Q_1^{-1}
&=\boldsymbol D_1\boldsymbol R_1^{-1/2}\left(\boldsymbol R_1^{1/2}\boldsymbol R_1\boldsymbol R_1^{1/2}\right)^{1/2}\boldsymbol R_1^{-1/2}\boldsymbol D_1^{-1}\\
&=\boldsymbol D_1\boldsymbol R_1^{-1/2}\boldsymbol R_1\boldsymbol R_1^{-1/2}\boldsymbol D_1^{-1}\\
&=\boldsymbol D_1\boldsymbol I_N\boldsymbol D_1^{-1}\\
&=\boldsymbol I_N.
\end{align*}

Therefore,
\begin{align*}
\boldsymbol Q_1^{-1}=\boldsymbol I_N,
\end{align*}
which corresponds exactly to the generalized impulse response identification of \citet{koop1996impulse} and \citet{pesaran1998generalized}.  

The maximized objective \eqref{eq:maxvalue} likewise specializes correctly when $C=1$: $\boldsymbol R_C=\boldsymbol R_1$, the $\mu_i$ are the squared eigenvalues of $\boldsymbol R_1$, and \eqref{eq:maxvalue} equals $\mathrm{tr}(\boldsymbol R_1)=N$. Hence the average correlation attains its upper bound of one, as it must when $\boldsymbol Q_1^{-1}=\boldsymbol I_N$.

\section{Conclusion}

We develop simultaneous clustered orthogonalization (SCO), filling the missing case in a two-dimensional view of VAR identification: partial orthogonalization implemented simultaneously rather than sequentially. SCO retains the clustered structure of \citet{BDY1}, allowing shocks to remain correlated within clusters, while replacing sequential cross-cluster orthogonalization with simultaneous optimization. The resulting identification is invariant to cluster ordering.

The key mathematical result is especially simple in correlation space. The identification problem reduces to $\boldsymbol Z \boldsymbol R_1 \boldsymbol Z = \boldsymbol R_C$, which, for symmetric positive definite $\boldsymbol R_1$ and $\boldsymbol R_C$, has a closed-form symmetric positive definite solution. We show that the associated identification matrix is the unique global maximizer of the underlying correlation criterion. The framework nests generalized identification when all variables belong to one cluster and the full simultaneous orthogonalization of \citet{francis2026principled} when each variable forms its own cluster; intermediate partitions provide a family of partial orthogonalizations indexed by the number and composition of clusters.

For connectedness analysis, SCO preserves economically meaningful within-cluster contemporaneous dependence while removing the arbitrary cluster ordering inherited from sequential orthogonalization. Thus ordering dependence can be eliminated without imposing full orthogonality. Future versions will investigate alternative cluster structures and their empirical implications, including for stock-market connectedness.

\appendix

\section{Proof of Theorem \ref{thm:optimality}}
\label{app:optimality}
Throughout, $\boldsymbol X_C = \boldsymbol D_C \boldsymbol Z \boldsymbol D_1^{-1}$ denotes the solution in equation \eqref{eq:closed_form}. The matrix\\
$\boldsymbol Z = \boldsymbol R_1^{-1/2} \big( \boldsymbol R_1^{1/2} \boldsymbol R_C \boldsymbol R_1^{1/2} \big)^{1/2} \boldsymbol R_1^{-1/2}$
is symmetric positive definite and satisfies $\boldsymbol Z \boldsymbol R_1 \boldsymbol Z = \boldsymbol R_C$. The admissible set is $\mathcal{X}_C = \left\{ \boldsymbol X : \boldsymbol X \boldsymbol \Sigma \boldsymbol X^{\top} = \boldsymbol \Omega_C \right\}$, and the objective is $f(\boldsymbol X) = \mathrm{tr}(\boldsymbol S_C \boldsymbol X \boldsymbol \Sigma)$. The proof proceeds in two steps. First, after whitening by the target covariance, we show that the admissible set is parameterized by orthogonal transformations of $\boldsymbol X_C$. Second, we show that no such transformation increases the objective.

\begin{lemma}
\label{lem:param}
Let $\boldsymbol G = \boldsymbol D_C \boldsymbol R_C^{1/2}$, so that $\boldsymbol G \boldsymbol G^{\top} = \boldsymbol \Omega_C$. Then
$\boldsymbol X \in \mathcal{X}_C$ if and only if $\boldsymbol X = \boldsymbol G \boldsymbol U \boldsymbol G^{-1} \boldsymbol X_C$ for some
orthogonal $\boldsymbol U$, and $\boldsymbol U$ is unique given $\boldsymbol X$.
\end{lemma}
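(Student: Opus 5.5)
The plan is to reduce both directions to a statement about orthogonal matrices by whitening with $\boldsymbol G$. I will use three facts: $\boldsymbol X_C\in\mathcal X_C$, $\boldsymbol X_C$ is invertible, and $\boldsymbol G$ is invertible. Feasibility holds because $\boldsymbol Z\boldsymbol R_1\boldsymbol Z=\boldsymbol R_C$ with $\boldsymbol Z$ symmetric, so that
\[
\boldsymbol X_C\boldsymbol\Sigma\boldsymbol X_C^\top=\boldsymbol D_C\boldsymbol Z\boldsymbol D_1^{-1}\boldsymbol D_1\boldsymbol R_1\boldsymbol D_1\boldsymbol D_1^{-1}\boldsymbol Z\boldsymbol D_C=\boldsymbol D_C\boldsymbol R_C\boldsymbol D_C=\boldsymbol\Omega_C .
\]
Invertibility of $\boldsymbol X_C$ follows because $\boldsymbol Z$ is positive definite and $\boldsymbol D_C,\boldsymbol D_1$ are positive diagonal. $\boldsymbol G$ is invertible because $\boldsymbol R_C^{1/2}$ is positive definite. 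Finally, $\boldsymbol G\boldsymbol G^\top=\boldsymbol D_C\boldsymbol R_C^{1/2}\boldsymbol R_C^{1/2}\boldsymbol D_C=\boldsymbol\Omega_C$, using symmetry of the square root.

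\emph{Sufficiency.} Suppose $\boldsymbol X=\boldsymbol G\boldsymbol U\boldsymbol G^{-1}\boldsymbol X_C$ with $\boldsymbol U$ orthogonal. Then
\[
\boldsymbol X\boldsymbol\Sigma\boldsymbol X^\top=\boldsymbol G\boldsymbol U\boldsymbol G^{-1}\boldsymbol\Omega_C\boldsymbol G^{-\top}\boldsymbol U^\top\boldsymbol G^\top .
\]
Since $\boldsymbol G^{-1}\boldsymbol\Omega_C\boldsymbol G^{-\top}=\boldsymbol I_N$, this collapses to $\boldsymbol G\boldsymbol U\boldsymbol U^\top\boldsymbol G^\top=\boldsymbol\Omega_C$, so $\boldsymbol X\in\mathcal X_C$.

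\emph{Necessity.} Take $\boldsymbol X\in\mathcal X_C$ and define the candidate $\boldsymbol U:=\boldsymbol G^{-1}\boldsymbol X\boldsymbol X_C^{-1}\boldsymbol G$. Rearranging gives $\boldsymbol X=\boldsymbol G\boldsymbol U\boldsymbol G^{-1}\boldsymbol X_C$ immediately, so it remains to check that $\boldsymbol U$ is orthogonal. From $\boldsymbol X_C\boldsymbol\Sigma\boldsymbol X_C^\top=\boldsymbol\Omega_C$ we get $\boldsymbol\Sigma=\boldsymbol X_C^{-1}\boldsymbol\Omega_C\boldsymbol X_C^{-\top}$. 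Substituting this into $\boldsymbol X\boldsymbol\Sigma\boldsymbol X^\top=\boldsymbol\Omega_C$ yields $\boldsymbol M\boldsymbol\Omega_C\boldsymbol M^\top=\boldsymbol\Omega_C$, where $\boldsymbol M=\boldsymbol X\boldsymbol X_C^{-1}$. Now replace $\boldsymbol\Omega_C$ by $\boldsymbol G\boldsymbol G^\top$ and conjugate by $\boldsymbol G^{-1}$ on the left and $\boldsymbol G^{-\top}$ on the right. This gives $(\boldsymbol G^{-1}\boldsymbol M\boldsymbol G)(\boldsymbol G^{-1}\boldsymbol M\boldsymbol G)^\top=\boldsymbol I_N$, that is, $\boldsymbol U\boldsymbol U^\top=\boldsymbol I_N$. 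Because $\boldsymbol U$ is square, $\boldsymbol U$ is orthogonal.

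\emph{Uniqueness.} If $\boldsymbol G\boldsymbol U_1\boldsymbol G^{-1}\boldsymbol X_C=\boldsymbol G\boldsymbol U_2\boldsymbol G^{-1}\boldsymbol X_C$, cancel the invertible factors $\boldsymbol G$ on the left and $\boldsymbol G^{-1}\boldsymbol X_C$ on the right to get $\boldsymbol U_1=\boldsymbol U_2$.

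The argument is elementary linear algebra. The only point needing care is the first step: feasibility and invertibility of $\boldsymbol X_C$ must be established without reference to the KKT derivation, using only the closed form and the identity $\boldsymbol Z\boldsymbol R_1\boldsymbol Z=\boldsymbol R_C$. Invertibility is also what lets every feasible $\boldsymbol X$ be written relative to $\boldsymbol X_C$; without it the candidate $\boldsymbol U$ would not be defined.
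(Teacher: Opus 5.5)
Your proof is correct and follows essentially the same route as the paper. Both verify $\boldsymbol X_C\in\mathcal X_C$, prove sufficiency via $\boldsymbol G^{-1}\boldsymbol\Omega_C\boldsymbol G^{-\top}=\boldsymbol I_N$, and for necessity set $\boldsymbol U=\boldsymbol G^{-1}\boldsymbol X\boldsymbol X_C^{-1}\boldsymbol G$ and show $\boldsymbol U\boldsymbol U^\top=\boldsymbol I_N$ using $\boldsymbol\Sigma=\boldsymbol X_C^{-1}\boldsymbol\Omega_C\boldsymbol X_C^{-\top}$. Your version also makes explicit two points the paper leaves implicit: the invertibility of $\boldsymbol X_C$ and $\boldsymbol G$, and the uniqueness of $\boldsymbol U$ by cancellation.
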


\begin{proof}
First, $\boldsymbol X_C \in \mathcal{X}_C$, because

$$
\boldsymbol X_C \boldsymbol \Sigma \boldsymbol X_C^{\top}
= \boldsymbol D_C \boldsymbol Z \boldsymbol D_1^{-1} \cdot \boldsymbol D_1 \boldsymbol R_1 \boldsymbol D_1 \cdot \boldsymbol D_1^{-1} \boldsymbol Z \boldsymbol D_C
= \boldsymbol D_C\, \boldsymbol Z \boldsymbol R_1 \boldsymbol Z\, \boldsymbol D_C = \boldsymbol D_C \boldsymbol R_C \boldsymbol D_C = \boldsymbol \Omega_C .
$$

Next, suppose $\boldsymbol X = \boldsymbol G \boldsymbol U \boldsymbol G^{-1} \boldsymbol X_C$ with $\boldsymbol U \boldsymbol U^{\top} = \boldsymbol I_N$. Because
$\boldsymbol G^{-1} \boldsymbol \Omega_C \boldsymbol G^{-\top} = \boldsymbol I_N$,

$$
\boldsymbol X \boldsymbol \Sigma \boldsymbol X^{\top}
= \boldsymbol G \boldsymbol U \boldsymbol G^{-1} \boldsymbol \Omega_C\, \boldsymbol G^{-\top} \boldsymbol U^{\top} \boldsymbol G^{\top}
= \boldsymbol G \boldsymbol U \boldsymbol U^{\top} \boldsymbol G^{\top} = \boldsymbol \Omega_C ,
$$

so $\boldsymbol X \in \mathcal{X}_C$. Conversely, let $\boldsymbol X \in \mathcal{X}_C$ and define
$\boldsymbol U = \boldsymbol G^{-1} \boldsymbol X \boldsymbol X_C^{-1} \boldsymbol G$. Then $\boldsymbol X = \boldsymbol G \boldsymbol U \boldsymbol G^{-1} \boldsymbol X_C$, and using
$\boldsymbol X_C^{-1} \boldsymbol \Omega_C \boldsymbol X_C^{-\top} = \boldsymbol \Sigma$,

$$
\boldsymbol U \boldsymbol U^{\top}
= \boldsymbol G^{-1} \boldsymbol X \big( \boldsymbol X_C^{-1} \boldsymbol \Omega_C \boldsymbol X_C^{-\top} \big) \boldsymbol X^{\top} \boldsymbol G^{-\top}
= \boldsymbol G^{-1} \boldsymbol X \boldsymbol \Sigma \boldsymbol X^{\top} \boldsymbol G^{-\top}
= \boldsymbol G^{-1} \boldsymbol \Omega_C \boldsymbol G^{-\top} = \boldsymbol I_N .
$$

\end{proof}

\begin{lemma}
\label{lem:K}
Let $\boldsymbol K = \boldsymbol G^{-1} \boldsymbol X_C \boldsymbol \Sigma \boldsymbol S_C \boldsymbol G$. Then
$\boldsymbol K = \boldsymbol R_C^{1/2} \boldsymbol Z^{-1} \boldsymbol R_C^{1/2}$, which is symmetric positive definite.
\end{lemma}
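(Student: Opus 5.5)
The plan is to substitute the explicit factorizations and simplify directly. We have $\boldsymbol G = \boldsymbol D_C\boldsymbol R_C^{1/2}$, so $\boldsymbol G^{-1} = \boldsymbol R_C^{-1/2}\boldsymbol D_C^{-1}$. Equation \eqref{eq:Z_bis} gives $\boldsymbol X_C = \boldsymbol D_C\boldsymbol Z\boldsymbol D_1^{-1}$. Equation \eqref{eq:D_1} gives $\boldsymbol\Sigma = \boldsymbol D_1\boldsymbol R_1\boldsymbol D_1$, and equation \eqref{eq:S_C} gives $\boldsymbol S_C = \boldsymbol D_C^{-1}\boldsymbol D_1^{-1}$. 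Since diagonal matrices commute, substituting these into $\boldsymbol K = \boldsymbol G^{-1}\boldsymbol X_C\boldsymbol\Sigma\boldsymbol S_C\boldsymbol G$ makes all the diagonal scalings cancel in sequence:
\begin{align*}
\boldsymbol K
&= \boldsymbol R_C^{-1/2}\boldsymbol D_C^{-1}\cdot\boldsymbol D_C\boldsymbol Z\boldsymbol D_1^{-1}\cdot\boldsymbol D_1\boldsymbol R_1\boldsymbol D_1\cdot\boldsymbol D_1^{-1}\boldsymbol D_C^{-1}\cdot\boldsymbol D_C\boldsymbol R_C^{1/2}\\
&= \boldsymbol R_C^{-1/2}\boldsymbol Z\boldsymbol R_1\boldsymbol R_C^{1/2}.
\end{align*}

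The next step removes $\boldsymbol R_1$ using the quadratic equation \eqref{eq:Z_squared}. Because $\boldsymbol Z$ is symmetric positive definite, it is invertible, and $\boldsymbol Z\boldsymbol R_1\boldsymbol Z=\boldsymbol R_C$ gives $\boldsymbol Z\boldsymbol R_1 = \boldsymbol R_C\boldsymbol Z^{-1}$. Substituting this in,
\[
\boldsymbol K = \boldsymbol R_C^{-1/2}\boldsymbol R_C\boldsymbol Z^{-1}\boldsymbol R_C^{1/2} = \boldsymbol R_C^{1/2}\boldsymbol Z^{-1}\boldsymbol R_C^{1/2},
\]
which is the claimed form. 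Here $\boldsymbol R_C^{1/2}$ denotes the symmetric positive definite square root, so $\boldsymbol R_C^{-1/2}\boldsymbol R_C=\boldsymbol R_C^{1/2}$.

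For the final claim, $\boldsymbol Z^{-1}$ is symmetric positive definite because $\boldsymbol Z$ is; the text preceding Lemma~\ref{lem:param} records this from the explicit formula, via the congruence $\boldsymbol R_1^{-1/2}(\cdot)\boldsymbol R_1^{-1/2}$ of a positive definite square root. The matrix $\boldsymbol K = \boldsymbol R_C^{1/2}\boldsymbol Z^{-1}(\boldsymbol R_C^{1/2})^\top$ is then a congruence of a symmetric positive definite matrix by a nonsingular symmetric matrix. It is therefore symmetric, and it is positive definite since $\boldsymbol v^\top\boldsymbol K\boldsymbol v = \boldsymbol w^\top\boldsymbol Z^{-1}\boldsymbol w>0$ with $\boldsymbol w=\boldsymbol R_C^{1/2}\boldsymbol v\neq\mathbf 0$.

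No step is a genuine obstacle. The only points needing care are bookkeeping: the diagonal factors must cancel in the right order, and one must use the symmetric square root of $\boldsymbol R_C$ so that $\boldsymbol G\boldsymbol G^\top=\boldsymbol\Omega_C$ holds and the congruence argument applies. Invertibility of $\boldsymbol Z$, which the elimination of $\boldsymbol R_1$ needs, is guaranteed by its positive definiteness.
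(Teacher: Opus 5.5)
Your proposal is correct and follows the paper's proof essentially step for step. You cancel the diagonal scalings to get $\boldsymbol R_C^{-1/2}\boldsymbol Z\boldsymbol R_1\boldsymbol R_C^{1/2}$, replace $\boldsymbol Z\boldsymbol R_1$ by $\boldsymbol R_C\boldsymbol Z^{-1}$ using $\boldsymbol Z\boldsymbol R_1\boldsymbol Z=\boldsymbol R_C$, and conclude symmetric positive definiteness by congruence of $\boldsymbol Z^{-1}$ with the symmetric nonsingular $\boldsymbol R_C^{1/2}$.
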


\begin{proof}
Diagonal matrices commute, so

$$
\boldsymbol X_C\, \boldsymbol \Sigma\, \boldsymbol S_C
= \boldsymbol D_C \boldsymbol Z \boldsymbol D_1^{-1} \cdot \boldsymbol D_1 \boldsymbol R_1 \boldsymbol D_1 \cdot \boldsymbol D_C^{-1} \boldsymbol D_1^{-1}
= \boldsymbol D_C\, \boldsymbol Z \boldsymbol R_1\, \boldsymbol D_C^{-1} .
$$

From $\boldsymbol Z \boldsymbol R_1 \boldsymbol Z = \boldsymbol R_C$ and the nonsingularity of $\boldsymbol Z$ we have
$\boldsymbol Z \boldsymbol R_1 = \boldsymbol R_C \boldsymbol Z^{-1}$, and therefore

$$
\boldsymbol K = \boldsymbol R_C^{-1/2} \boldsymbol D_C^{-1} \cdot \boldsymbol D_C\, \boldsymbol R_C \boldsymbol Z^{-1} \boldsymbol D_C^{-1} \cdot \boldsymbol D_C \boldsymbol R_C^{1/2}
= \boldsymbol R_C^{1/2}\, \boldsymbol Z^{-1} \boldsymbol R_C^{1/2} .
$$

The matrix $\boldsymbol Z^{-1}$ is symmetric positive definite, and congruence by the symmetric nonsingular matrix $\boldsymbol R_C^{1/2}$ preserves both properties.
\end{proof}

\noindent \emph{Proof of Theorem \ref{thm:optimality}.}
Let $\boldsymbol X \in \mathcal{X}_C$. By Lemma \ref{lem:param}, $\boldsymbol X = \boldsymbol G \boldsymbol U \boldsymbol G^{-1} \boldsymbol X_C$
for an orthogonal $\boldsymbol U$. By the cyclic property of the trace,

$$
f(\boldsymbol X)
= \mathrm{tr}\big( \boldsymbol S_C\, \boldsymbol G \boldsymbol U \boldsymbol G^{-1} \boldsymbol X_C\, \boldsymbol \Sigma \big)
= \mathrm{tr}\big( \boldsymbol U\, \boldsymbol G^{-1} \boldsymbol X_C\, \boldsymbol \Sigma\, \boldsymbol S_C\, \boldsymbol G \big)
= \mathrm{tr}( \boldsymbol U \boldsymbol K ) ,
$$

with $\boldsymbol K$ symmetric positive definite by Lemma \ref{lem:K}. Write
$\boldsymbol K = \boldsymbol Q \boldsymbol \Lambda_{\kappa} \boldsymbol Q^{\top}$, where $\boldsymbol Q$ is orthogonal and
$\kappa_i > 0$ for all $i$, and let $\boldsymbol M = \boldsymbol Q^{\top} \boldsymbol U \boldsymbol Q$, which is also
orthogonal. Then
  
$$ 
f(\boldsymbol X) = \mathrm{tr}\big( \boldsymbol M \boldsymbol \Lambda_{\kappa} \big)  
= \sum_{i=1}^{N} M_{ii}\, \kappa_i
\;\le\; \sum_{i=1}^{N} \kappa_i = \mathrm{tr}(\boldsymbol K),
$$
    
because no entry of an orthogonal matrix exceeds one in absolute value.
The bound is attained at $\boldsymbol U = \boldsymbol I_N$, so $\boldsymbol X_C$ is a global maximizer with
$f(\boldsymbol X_C) = \mathrm{tr}(\boldsymbol K)$.

For uniqueness, equality requires $M_{ii}=1$ for every $i$, because each $\kappa_i$ is strictly positive. Since the columns of an orthogonal matrix have unit norm, $M_{ii}=1$ forces all off-diagonal entries in column $i$ to vanish. Applying this argument to every column gives $\boldsymbol M=\boldsymbol I_N$, hence $\boldsymbol U=\boldsymbol I_N$ and $\boldsymbol X=\boldsymbol X_C$.

Finally, we compute the maximized objective. Diagonal matrices commute, so

$$
f(\boldsymbol X_C) = \mathrm{tr}\big( \boldsymbol S_C \boldsymbol X_C \boldsymbol \Sigma \big)
= \mathrm{tr}\big( \boldsymbol D_1^{-1}\, \boldsymbol Z \boldsymbol R_1\, \boldsymbol D_1 \big)
= \mathrm{tr}( \boldsymbol Z \boldsymbol R_1 ) .
$$

Substituting the closed form of $\boldsymbol Z$ and applying the cyclic property once more,

$$
\mathrm{tr}( \boldsymbol Z \boldsymbol R_1 )
= \mathrm{tr}\!\left[ \boldsymbol R_1^{-1/2} \big( \boldsymbol R_1^{1/2} \boldsymbol R_C \boldsymbol R_1^{1/2} \big)^{1/2}
\boldsymbol R_1^{1/2} \right]
= \mathrm{tr}\!\left[ \big( \boldsymbol R_1^{1/2} \boldsymbol R_C \boldsymbol R_1^{1/2} \big)^{1/2} \right]
= \sum_{i=1}^{N} \mu_i^{1/2} .
$$

The $\mu_i$ are strictly positive because $\boldsymbol R_1^{1/2} \boldsymbol R_C \boldsymbol R_1^{1/2}$ is a congruence of the positive definite matrix $\boldsymbol R_C$. $\qed$

\bibliographystyle{chicago}
\bibliography{biblio}

\end{document}